\documentclass{article}
\usepackage[utf8]{inputenc}
\usepackage[english]{babel}
\usepackage{calrsfs}
\usepackage{amsthm, amsmath, amssymb}
\usepackage{authblk}
\usepackage[symbol]{footmisc}
\usepackage{hyperref}
\usepackage{xcolor}

\usepackage{amsmath}
\newtheorem{prop}{Proposition}
\newtheorem{axiom}{Axiom}
\newtheorem{remark}{Remark}

\newtheorem{example}{Example}[section]
\usepackage{mathtools}          
\DeclarePairedDelimiter\Floor\lfloor\rfloor
\DeclarePairedDelimiter\Ceil\lceil\rceil
\usepackage{graphicx}
\usepackage{tikz}

\hypersetup{
    colorlinks=false,
    linkcolor=black,
    filecolor=black,      
    urlcolor=black,
}
\theoremstyle{definition}
\newtheorem{definition}{Definition}[section]

\title{Proportionality in Committee Selection with Negative Feelings}
\author[1]{Nimrod Talmon\thanks{talmonn@bgu.ac.il}}
\author[2]{Rutvik Page\thanks{rutvikpage1999@gmail.com}}
\affil[1]{Ben-Gurion University, Israel}
\affil[2]{Indian Institute of Information Technology, Nagpur, India}

\date{}

\begin{document}

\maketitle

\begin{abstract}
We study a class of elections in which the input format is trichotomous and allows voters to elicit their negative feelings explicitly. In particular, we study multiwinner elections with a special proclivity to elect proportionally representative committees. That is, we design various axioms to deal with negative feelings and suggest some structures to these preferences that allow better preference aggregation rules. We propose two different classes of axioms designed to aggregate trichotomous preferences more efficiently. We propose trichotomous versions of some well known multiwinner voting rules and report their satisfiability of our axioms. Hence, with reports of our simulations as evidence, we build upon the social optimality of our proportionality based axioms to evaluate the quality of voting rules for electing a proportionally representative committee with trichotomous ballots as inputs. 
\end{abstract}

\section{Introduction}

Aggregation of possibly conflicting preferences to take socially optimal decisions is a central problem in the field of Computational Social Choice and has been actively pursued by researchers in the AI community \cite{Conitzer}. One of the important scenarios in this field is to elect a committee from a given set of alternatives. In multiwinner elections, the challenge is to select a size-$k$ committee from a set of candidates, given a voter preference profile. Generally, the preference elicitation methods available to voters are either ranked ballots or approval ballots. In ranked ballots, the voters put forth complete and strict rankings over the candidates in a descending order where the most preferred candidate is ranked the highest and the least preferred candidate is ranked the lowest. Another popular way of preference elicitation is approval ballots, in which each voter only puts forth a subset of candidates that she approves. We study a more general setting of preference elicitation in which  each voter divides the set of candidates into three mutually non intersecting subsets. These three subsets contain the candidates about whom the voters are in approval of, are indifferent about or are in disapproval of respectively. 

The design of multiwinner voting rules is a challenging task since its applications range from excellence based rules
\cite{Nimrod1,barbera2008choose,propMultiwinner,articleProp}  through selecting a diverse set of candidates \cite{propMultiwinner,articleProp,SkowronLang} to proportional representation \cite{LB15,Lu10budgetedsocial,sfelkind,JR,PJR,CCAV}.
Recently, the design of efficient multiwinner voting rules and defining their natural social choice properties \cite{PJR,betzler2013computation,Aziz2014ComputationalAO} has received a considerable attention from the Artificial Intelligence community. As a consequence, a rich variety of multiwinner voting rules and social choice axioms are emerging. 

In various preference aggregation scenarios some voters might wish not only to 
describe their positive feelings towards certain alternatives, but 
also to describe their negative feelings with the same rigour while preserving the right to remain neutral about some \cite{Alcantud,Hillinger}. One prominent example is the veto rule in single winner elections, where each voter specifies one 
candidate which she dislikes. Another extreme example is approval voting, 
in which each voter provides an \emph{approval set}, which is simply a 
subset of candidates from the available candidates; here, it is not 
clear whether the voter simply does not care about the candidates she 
did not include in her approval set or perhaps she has, say, strong 
negative feelings towards them.

For instance, inclusion of certain vegetables in grocery 
shopping list does not necessarily mean that those items that are not 
on the list are despised by the shopper, neither does it necessarily 
mean that the shopper is indifferent about them. \emph{Au contraire},
while deciding to go to the movies with friends, its not necessary that
everyone has a binary preference over every movie being 
considered, i.e., there might be some friends in the group who are 
indifferent about watching some movie they neither like or dislike or 
there may be others who particularly dislike a movie hence emphasizing 
the fact that there is a clear distinction between indifference and disliking, which is inadequately captured by approval voting.
Thus, a method of preference elicitation that widens the scope of 
expressibility of the voter requirements is required in order to 
capture social choice optimally. 

Political polarization is a common phenomenon in the current era as a diverse mixture of opinions float via the internet to the voters. Approval ballots (also referred to as dichotomous preferences) provide a way of either approving or disapproving a particular candidate hence leaving the populace to a more vulnerable position with regards to political polarization. Trichotomous ballots provide a way out of this by allowing voters to show their apathy towards some candidates and hence to some extent reduce the magnitude of political polarization.

\subsection{Our Contributions}
In this paper, we initiate a principled study for allowing voters to directly express negative feelings towards certain available options. In the standard model of social choice there are various input formats, where the most prominent ones are perhaps  Approval ballots,  Ordinal ballots and Cumulative ballots (Plurality voting is very popular, however we view it as a kind of approval voting).
We propose two broad classes of axioms for evaluating committees selected by elections having input as trichotomous preferences basically differentiated by the liberality of axioms to provide representation to certain voters. Therefore, we propose new axioms for formalizing the view of proportional representation under trichotomous preferences and explore the nature of different voting rules in trichotomous domains. We propose some voting rules tailored for trichotomous inputs and show by simulations the extent to which our rules as well as known polynomial time multiwinner voting rules satisfy our axioms and also formalize a preference domain restriction in which some of these axioms seem to be a natural fit (Definition 3.1).

\subsection{Related Work}

The works of Condorcet (1793) were the first ones to propose a voting rule in which the voters were required to partition the candidate set into three groups according to their preferences. Brams \cite{brams} proved that elections allowing only approval votes are equivalent to those allowing only disapproval votes and that disapproval votes are redundant when there is an absence of lower bound on the number of approvals a candidate should muster to be declared the winner. Yilmaz \cite{yilmaz} presents a normative study of trichotomous voting as a superior alternative to approval voting in a way that trichotomous ballots represent voter feelings more accurately. Falsenthal et al. \cite{Falsenthal}  initiate the study of single winner election under trichotomous preferences and present a contrast between approval voting and voting with trichotomous ballots and conclude that the latter leads to voters being more decisive. Hillinger et al.\cite{Hillinger}, Alcantud et al \cite{Alcantud}. and Smaoui et al \cite{smaoui} and Lapresta et al. \cite{bestworst} propose and study utilitarian scoring rules and their axiomatic properties which in a way allow voters to attribute ranks to candidates over a range (voters assign scores to candidates from a given range) typically selecting that candidate as the winner which has the highest positive difference between the number of voters approving and disapproving the candidate. 
Baumeister et al. \cite{baum} and Zhou et al. \cite{Zhou} study the utilitarian and egalitarian variants of voting rules for committee elections with voter dissatisfaction with trichotomous preferences. While the former study concentrates on using a distance based approach like Kemeny Distance, the latter generalizes popular voting rules like Chamberlin-Courant Rule, Proportional Approval Voting and Satisfaction Approval Voting to trichotomous domains and find the parameterized complexity of winner determination under these rules. 
Ouafdi et al. \cite{Ouafdi} draw out a comparison between evaluative voting (trichotomous preferences with scores being assigned from the set \{2, 1, 0\} to each candidate in contrast to the earlier stated \{1, 0, -1\}) and popular voting methods like Borda Rule, Plurality and the Approval Rule and investigate the proclivity of evaluative voting to elect Condorcet committees. Aziz and Lee \cite{aziz2020expanding}  generalize Proportional Solid Coalitions defined for strict preferences to weak preferences and show that Proportional Justified Representation \cite{PJR} is also a specialization of the same.

\section{Preliminaries}

We first provide preliminaries regarding approval ballots and our modeling of negative feelings via a generalization to trichotomous preference and then provide preliminaries regarding proportionality axioms for approval ballots that, later, we adapt to our setting.

\subsection{Dichotomous and Trichotomous Preferences}

Given a set of alternatives of size $m$, $A = \{a_1, \dots, a_m\}$, in approval voting each voter $i$ in voter set $V$ of size $n$ specifies an \emph{approval set} $A_i \subseteq A$ and the goal is to select a committee $W$ of size exactly $k$. Usually, the alternatives in $A_i$ are understood as the alternatives ``approved'' by $i$.
Define a dichotomous voter profile as a vector $A_{Dic} = (A_1, \dots A_n)$ such that $A_i \subseteq A$ $\forall i \in V$.
Informally, the alternatives in $A \setminus A_v$ can be understood either as (1) alternatives for which $v$ does not have any feelings about, or as (2) alternatives for which $v$ has negative feelings about (contrasted with the alternative in $A_v$, for which it can be understood that $v$ has positive feelings about).
In a way, this ambiguity is precisely the problem, as there is no way for the aggregation mechanism to figure out which of the two cases it is for each voter.  

A natural remedy might be to let each voter specify not only an approval set, but also a disapproval set; that is, let each voter $i$ specify $A^+_i$ as well as $A^-_i$, such that $A_i^+, A_i^- \subseteq A$ and $A_i^+ \cap A_i^- = \emptyset$.
Then, for each alternative, a voter $i$ would place the alternatives for which she has positive feelings about in $A^+_i$ and place the alternatives for which she has negative feelings about in $A^-_i$; the remaining alternatives,i.e., those in $A^0_i := A \setminus (A^+_i \cup A^-_i)$ are those for which $i$ has not feelings at all. We therefore define a trichotomous preference profile as a vector $A_{tri} = ((A_1^+, A_1^-), \dots (A_n^+, A_n^-))$.

\subsection{Proportionality for Approval Ballots}

We recall two known axioms of proportional representation in approval voting for selecting a committee of size $k$.
\begin{definition}{JR \cite{JR}}
A committee $W$ is said to satisfy \emph{Justified Representation} if there does not exist group of voters $V' \subseteq V$ such that $|V'| \geq \frac{n}{k}$ and 
$(\cup_{i \in V'}A_i) \cap W = \emptyset$ 
\end{definition}
\begin{definition}{PJR \cite{PJR}}
A committee $W$ is said to satisfy \emph{Proportional Justified Representation} if there does not exist a group of voter $V' \subseteq V$ with size $|V'| \geq l\frac{n}{k}$ for $l \in \{1, 2, \dots k\}$ such that $|\cap_{i \in V'}A_i| \geq l$ but 
$|(\cup_{i \in V'}A_i) \cap W| < l$.
\end{definition}
The idea behind \emph{JR} and \emph{PJR} is that if voters in large enough groups are inclined to have similar choices, then at least some voters of the group should get some representation in the committee $W$. 
In this paper, we generalize dichotomous preferences to trichotomous preferences which allows an extension in expressiveness to the voters by allowing them not only to elicit approval or disapproval but also indifference over the candidates in the candidate set. We represent the position of a candidate $c$ for a voter $i$ such that $pos_c(i) \in \{1, 0, -1\}$ if the candidate lies in the approval, indifference and disapproval set respectively of voter $i$. We define the positional score of a candidate $c$ as the sum of positions of the candidate in the preference ballots of all voters in the electorate i.e. $\Sigma_{i \in V}pos_c(i)$. Moreover, we use $[k]$ as an abbreviation for the set $\{1, 2, \dots k\}$.  
\begin{remark}\label{remark:remark1}
A dichotomous approval voter profile \textbf{$A_{dic}$} = $(A_1, A_2, \dots A_n)$ is essentially a trichotomous voter profile \textbf{$A_{tri}$} = ~$((A_1^+, A_1^-), \dots (A_n^+, A_n^-))$ with $A_i^+ \cup A_i^- = C \text{ } \forall i \in V$, or $A_i^0 = \emptyset$ $\forall i \in V$ which means that a trichotomous profile with every voter casting an empty indifference ballot is essentially a dichotomous profile. 
\end{remark}

\section{Axioms for Trichotomous Preferences}

The basic unit of voters whom a proportionally representative committee might proffer representation to would be a group of voters who have the proclivity to have similarly aligned preferences. Ideally, this calls for the pursuit of providing representation to every large enough and seemingly ``cohesive" group of voters~\cite{sfelkind}, who essentially form a solid coalition amongst themselves \cite{aziz2020expanding}.

We present two different classes of axioms for proportional representation in trichotomous preference domains the fundamental difference amongst whom is the definition of `cohesiveness' of a voter group. Essentially, Class I axioms provide a more liberal definition of cohesive representation while Class II axioms assert a stricter definition. As a result of this, the number of voter groups to be served representation in Class I axioms is higher than that in Class II axioms. We further propose new polynomial time executable voting rules and show their tendency to satisfy our proposed axioms through simulations. 

\subsection{Class I}

Our first stride in eliciting axioms for proportional representation in trichotomous voting domains is given next.  

\begin{remark}
We say that a group of voters $V'$ is worthy of justified representation if its size is at least the quantity that reflects uniform distribution of $k$ seats amongst $n$ voters i.e. $\frac{n}{k}$, while the voters in the group are at least as preferentially aligned as to support a set of candidates each of which is approved by at least one voter, but none is disapproved by any. In effect, the set of voters is said to be worthy and preferentially aligned if its size is at least $\frac{n}{k}$ and $|\cup_{i \in V'}A_i^+ \setminus \cup_{i \in V'}A_i^-| \neq \emptyset$.
\end{remark}

\begin{axiom}[Strong Preliminary Representation (SPR)]
A committee is said to satisfy \emph{Strong Preliminary Representation} if all subset of voters $V' \subseteq V$ of size $|V'| \geq \frac{n}{k}$ satisfy $|(\cup_{i \in V'}A_i^+) \setminus (\cup_{i \in V'}A_i^-)| \geq 1$ $\forall l \in [k]$,

the committee contains at least one candidate approved by at least one voter in $V'$ whilst not containing any unanimously disapproved candidate by the voters in the group. That is,
$$\exists i \in V': |W \cap A_i^+| \geq 1 \text{ and } |W \cap (\cap_{j \in V'}A_i^-)| = 0\ .$$
\end{axiom}

The definition above captures the idea that a large enough group with similarly aligned preferences should have at least one member who gets her favorable candidate in the committee but at the same time the unanimously disliked candidate should not feature in the committee. The committees satisfying this axiom take an all encompassing approach to the satisfaction of voter approval as well as voter resentment towards the committee. 

\begin{example}
Suppose there are $4$ voters with the following voter preferences over the set of candidates $a, b, c, d$ and $k = 2$. 
\begin{align*}
    1:& \hspace{8pt} \{d, a, b\} \succ c \succ e \\
    2:& \hspace{8pt} \{a, b\} \succ c \succ \{d, e\} \\
    3:& \hspace{10pt} a \succ \{b, c\} \succ \{d, e\} \\
    4:& \hspace{8pt} \{b, c\} \succ a \succ \{d, e\}
\end{align*}
\end{example}
The committee $\{a, b\}$ forms a committee adherent to the above axiom since there is at least one voter in each group who has a favorite candidate in the committee and there is no such group of voters which is large enough and find their commonly disliked candidate in the committee. It is useful to note that while \emph{Strong Preliminary Representation} seems to provide appropriate and in some ways a seemingly balanced allocation of candidates amongst the voters, it is indeed a strong notion and is unfortunately, not guaranteed to exist. 
\begin{example}
Suppose there are two voters and two candidates $\{v_1, v_2\}$ and $\{c_1, c_2\}$ respectively and the voter profile is as follows:
\begin{align*}
    1:& \hspace{8pt} c_1 \succ \{\} \succ c_2 \\
    2:& \hspace{8pt} c_1 \succ \{\} \succ c_2  
\end{align*}
Suppose that the committee size is $k = 2$, which means that there is only one committee possible which is $\{c_1, c_2\}$. This committee would not satisfy the axiom since there is at least one candidate in the committee that is despised by each voter, who in this case individually form a `deserving' and `cohesive' group of voters.  
\end{example}
In order to mitigate the non-existence guarantees of the above axiom, we propose its weakened version. In effect, we relax the mandatory debarring of the unanimously despised candidate from the committee. We assert that a large enough and cohesive group of voters is granted representation on the committee if there are at least some candidates in the committee that are approved by at least some voters in the voter group. 

\begin{axiom}[Weak Trichotomous Justified Representation (WTJR)]
A committee $W$ is said to follow \emph{Weak Trichotomous Justified Representation} if for all sets of voters $V'$ of size $|V'| \geq \frac{n}{k}$ having $|(\cup_{i \in V'}A_i^+) \setminus (\cup_{i \in V'}A_i^-)| \geq 1$ the following is satisfied:
$$|(\cup_{i \in V'}A_i^+)\cap W| \geq 1$$
\end{axiom}

\begin{axiom}[Weak Trichotomous Proportional Justified Representation (WTPJR)]
A committee $W$ is said to follow \emph{Weak Trichotomous Proportional Justified Representation} if for all sets of voters $V'$ of size $|V'| \geq l\frac{n}{k}$ for all $l \in [k]$ having $|(\cup_{i \in V'}A_i^+) \setminus (\cup_{i \in V'}A_i^-)| \geq l$ the following is satisfied:
$$|(\cup_{i \in V'}A_i^+)\cap W| \geq l$$
\end{axiom}
In effect, Weak Trichotomous Justified Representation is nothing but a special case for Weak Trichotomous Proportional Justified Representation with $l = 1$.  
As an instance, consider example 3.1 where the committee $\{a, b\}$ satisfies both the above axioms since for any large enough voter group, the set difference between the unions of approval and disapproval sets of voters contain $c_1$ as well as $c_2$.

\begin{remark}
A committee satisfying even \emph{Weak Trichotomous Justified Representation} is not guaranteed to exist for all voter profiles. For instance, if there are five voters and three candidates $\{a, b, c\}$ and $k = 2$ with the following voter profile:
\begin{align*}
    1:& \hspace{8pt} \{\} \succ \{a, b, c\}  \succ \{\} \\
    2:& \hspace{8pt} \{\}  \succ \{a, b, c\} \succ \{\} \\
    3:& \hspace{8pt} a \succ \{\} \succ \{b, c\} \\
    4:& \hspace{8pt} b \succ \{\} \succ  \{a, c\} \\
    5:& \hspace{8pt} c \succ \{\} \succ \{a, b\}
\end{align*}
In this case, whatever the $2$ size committee is, a voter group of voters $1, 2$ and any one of $3, 4, 5$ is unsatisfied. 
\end{remark}

On the contrary, we find that there always exists a committee that satisfies \emph{Weak Trichotomous Proportional Justified Representation} if there is some structure to the preferences of voters in the electorate. 
\begin{definition}{\emph{Decisive Electorate}}
An electorate is said to be decisive if none of the voters is indifferent about the available alternatives. Formally, $A_i^0 = \emptyset$ $\forall i \in V$.
\end{definition}

\begin{prop}
A committee $W$ satisfies Proportional Justified Representation in dichotomous preference domains if and only if it satisfies Weak Trichotomous Proportional Justified Representation in trichotomous preference domains in a decisive electorate. 
\end{prop}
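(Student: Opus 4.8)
The plan is to prove the biconditional by showing that, under the decisive electorate assumption, both the hypothesis (``cohesiveness'') and the conclusion (``representation'') of the PJR condition coincide exactly with those of the WTPJR condition, for the same group $V'$ and the same parameter $l$. Since both axioms are phrased as the non-existence of a violating group, it then suffices to argue that a group $V'$ violates PJR if and only if it violates WTPJR; the equivalence of the two axioms follows immediately.

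First I would invoke Remark~\ref{remark:remark1} to fix the correspondence between the two input formats. In a decisive electorate $A_i^0 = \emptyset$ for every voter $i$, so each voter's ballot is completely determined by its approval set: identifying the dichotomous approval set with the positive set, we have $A_i = A_i^+$ and $A_i^- = A \setminus A_i$. Under this identification the representation conditions become literally identical, because $\cup_{i \in V'}A_i^+ = \cup_{i \in V'}A_i$ and hence $|(\cup_{i \in V'}A_i^+)\cap W| = |(\cup_{i \in V'}A_i)\cap W|$, with the same threshold $l$ on the right-hand side of both axioms.

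The heart of the argument is to show that the two cohesiveness hypotheses describe the same set, and therefore have the same cardinality. By De~Morgan's law, $\cup_{i \in V'}A_i^- = \cup_{i \in V'}(A \setminus A_i) = A \setminus \cap_{i \in V'}A_i$. Writing $X = \cup_{i \in V'}A_i$ and $Y = \cap_{i \in V'}A_i$ and substituting into the WTPJR hypothesis, the set difference collapses as $X \setminus (A \setminus Y) = X \cap Y = Y$, where the last equality uses $Y \subseteq X$. Thus $(\cup_{i \in V'}A_i^+)\setminus(\cup_{i \in V'}A_i^-) = \cap_{i \in V'}A_i$, so the WTPJR condition $|(\cup_{i \in V'}A_i^+)\setminus(\cup_{i \in V'}A_i^-)| \geq l$ holds exactly when the PJR condition $|\cap_{i \in V'}A_i| \geq l$ holds.

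Combining these observations, for every group $V'$ with $|V'| \geq l\frac{n}{k}$ and every $l \in [k]$, the PJR hypothesis and conclusion are equivalent, respectively, to the WTPJR hypothesis and conclusion. Hence a group witnesses a PJR violation precisely when it witnesses a WTPJR violation, and the two axioms are satisfied by exactly the same committees in a decisive electorate. I do not anticipate a genuine obstacle; the only step requiring care is the direction of the complement manipulation---ensuring the difference collapses to the intersection rather than to its complement---which the observation $Y \subseteq X$ settles cleanly.
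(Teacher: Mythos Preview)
Your proposal is correct and follows essentially the same route as the paper: use the decisive-electorate identification $A_i^+ = A_i$, $A_i^- = A \setminus A_i$ to show that the cohesiveness sets and the representation sets in PJR and WTPJR coincide, so the two axioms agree on every committee. Your De~Morgan computation $(\cup_{i}A_i^+)\setminus(\cup_{i}A_i^-) = \cap_i A_i$ is in fact stated more carefully than the paper's corresponding identity, and your framing via ``violation iff violation'' dispatches both directions at once, but the underlying argument is the same.
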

\begin{proof}
Since the committee $W$ satisfies PJR, $|\cap_{i \in V'}A_i| \geq l$ $\forall l \in [k]$ and for all subsets of voters $V' \subseteq V$ such that $|V'| \geq l\frac{n}{k}$. Now since the electorate is decisive, $A_i^0 = \emptyset$ $\forall i \in V'$ for all subsets of voters $V' \subseteq V: |V'| \geq l\frac{n}{k}$. This means that for all such subsets of voters in the dichotomous domain,  $|\cap_{i \in V'}A_i| = |\cup_{i \in V'}A_i \setminus (C \setminus \cup_{i \in V'}A_i)| \geq l$, in effect, the intersection of approval sets of all voters in a deserving and cohesive group $V'$ is equal to the union of the approval sets minus the union of disapproval sets of all voters in the group. When projected in trichotomous domain, since this electorate is decisive, $A_i = A_i^+$ and $C \setminus A_i = A_i^-$ and $A_i^0 = \emptyset$. This means that all sets of voters that are assured representation in PJR are also guaranteed to have representation in WTPJR. Since the approval sets of voters under trichotomous and dichotomous domains are the same in a decisive electorate, the amount of representation that they get is also the same, i.e. $|\cup_{i \in V'}A_i \cap W| = |\cup_{i \in V'}A_i^+ \cap W| \geq l$. Since we have established the fact that the approval and disapproval sets of voters in dichotomous and trichotomous preference domains are equal, the sets of voters of a decisive electorate that get representation in the committee due to WTPJR are the same as those subsets of voters that get representation due to PJR and respectively get the same representation. Therefore, in a decisive electorate, a committee that satisfies WTPJR also satisfies PJR.   
\end{proof}

We define a still weaker axiom with the objective that it still provides at most as much representation to large enough voter groups as the previously given stronger axioms give.

\begin{axiom}[Weak Ambivalent Representation (WAR)]
A committee $W$ satisfies \emph{WAR} if for every group of voters $V'$ of size $V' \geq l\frac{n}{k}$ for some $l \in [k]$ the following condition is satisfied:
$$|(\cup_{i \in V'}A_i^+) \setminus (\cup_{i \in V'}A_i^-)| \geq l \implies |((\cup_{i \in V'}A_i^+) \cup (\cap_{i \in V'}A_i^0)) \cap W| \geq l$$ 
\end{axiom}

Intuitively, the definition says that the committee should have at least $l$ candidates from the set of candidates defined by the union of all the approved candidates and the unanimously `not cared about' candidates by the voters. In a way, the definition further degrades the utility that the voters in the voter groups are entitled to in the previous axioms. \\
Although the above definition is weaker than the previous axioms, there are still profiles for which there does not exist a committee such that it follows WAR. For instance, there is no committee that satisfies WAR if the preference profile is the same as mentioned in remark 3 i.e. the voter group formed by the first two voters and any one of the other three voters is always dis-satisfied with any committee formed out of the three candidates.

\begin{axiom}[Weakest Axiom (WA)]
A committee $W$ satisfies \emph{weakest axiom} if for every subset of voters $V' \subseteq V$ of size $|V| \geq l\frac{n}{k}$ for all $l \in [k]$, the following implication stands true:
$$|((\cup_{i \in V'}A_i^+) \setminus (\cup_{i \in V'}A_i^-))| \geq l \implies |((\cup_{i \in V'}A_i^+) \cup (\cup_{i \in V'}A_i^0))\cap W| \geq l$$
\end{axiom}

This is the most liberal axioms that we propose in Class I axioms. Intuitively, a committee satisfies this axiom if at least some every large enough group (same as defined before) get some candidates from their approval as well as indifference classes. Unfortunately, even committees satisfying this axiom are not guaranteed to exist, but as we present in the following sections, there is a high probability for a committee computed by some voting rules to satisfy this axiom.   \\

\subsection{Class II}

All the Class I axioms proposed have been in the spirit that the voters form cohesive groups with other voters even if there is a candidate about whom one of them is in approval while the other has neutral feelings as a result of which there are a high number of deserving and cohesive voter groups formed leading to an obvious difficulty in accommodating every voter's choice. In the definitions described in this section, we do away with this definition of cohesiveness to a less accommodating version but strengthen the amount of representation that the voters in the voter groups get in the committee.

\begin{axiom}[New Cohesiveness Representation (NCR)]

A committee satisfies \emph{New Cohesiveness Representation} if for no set of voters of size $|V'| \geq l\frac{n}{k}$ for all $l \in [1, k]$, if $|\cap_{i \in V'}A_i^+| \geq l$ then $|(\cap_{i \in V'}A_i^+) \cap W| < l$.
\end{axiom}

\begin{prop}
For a given committee size k, a committee satisfying \emph{New Cohesiveness Representation} always exists.
\end{prop}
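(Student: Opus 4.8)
The plan is to mirror the standard argument that Proportional Justified Representation is always satisfiable, treating each voter's approval part $A_i^+$ as an ordinary approval ballot. The first observation is that the cohesiveness condition of NCR, namely $|V'| \geq l\frac{n}{k}$ together with $|\cap_{i \in V'} A_i^+| \geq l$, is exactly the cohesiveness condition of PJR applied to the profile $(A_1^+, \dots, A_n^+)$. The crucial difference is the \emph{target} of the representation: PJR asks for $l$ representatives from the \emph{union} $\cup_{i \in V'} A_i^+$, whereas NCR insists on $l$ representatives from the \emph{intersection} $\cap_{i \in V'} A_i^+$. So I would first record this correspondence and isolate precisely this strengthening as the thing to handle.

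Next I would try an explicit greedy construction that processes demands from the most cohesive groups downward. Iterating $l$ from $k$ down to $1$, for each group $V'$ that is cohesive at level $l$ I would add candidates drawn from its common-approval set $\cap_{i \in V'} A_i^+$ until $|(\cap_{i \in V'} A_i^+)\cap W| \geq l$; this is always locally possible since $|\cap_{i \in V'} A_i^+| \geq l$ guarantees that enough common candidates exist to draw from. The intended invariant is a charging argument: each level-$l$ group carries at least $l\frac{n}{k}$ voters, i.e.\ a quota of $\frac{n}{k}$ voters per demanded seat, matching the uniform entitlement, so that a laminar or nested family of demands can be satisfied with a total of at most $k$ seats.

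The main obstacle, and where I would concentrate almost all of the effort, is proving that the intersection-demands never \emph{collectively} force more than $k$ distinct candidates. In PJR this is painless because a single popular candidate in the union can simultaneously serve many overlapping groups; under NCR a representative of a group must lie in that group's common-approval set, so two overlapping cohesive groups whose common-approval sets are disjoint singletons $\{c\}$ and $\{c'\}$ \emph{cannot share} a representative, and each forces its own candidate. Once several such groups with pairwise-distinct singleton intersections appear at level $l=1$ (each of size $\geq \frac{n}{k}$), the forced candidates can in principle exceed $k$, and a naive ``top candidates by approval score'' committee already fails these demands. The viability of the whole argument therefore hinges on ruling out this configuration; I expect this to turn on the precise reading of the size threshold $|V'| \geq l\frac{n}{k}$ (whether it is strict, or rounded) or on an additional structural restriction on the profile, and nailing down that point is the decisive step.
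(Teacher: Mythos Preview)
Your plan closely parallels the paper's argument: the paper also runs a greedy procedure that iterates $l'$ from $k$ down to $1$, at each step finds a size-$l'$ candidate set $C'$ that is commonly approved by at least $l'\tfrac{n}{k}$ of the \emph{not yet removed} voters, adds $C'$ to $W$, removes those voters, and repeats. Its correctness paragraph then argues that since each chosen candidate ``uses up'' $\tfrac{n}{k}$ voters, after $k$ candidates all $n$ voters are accounted for, and concludes that no unsatisfied cohesive group can remain.

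Your instinct that the intersection target is the crux is exactly right, and in fact your worry is fatal rather than merely technical. The paper's counting argument only shows that every voter is eventually removed; it does \emph{not} show that an arbitrary cohesive group $V'$ gets a representative from $\bigcap_{i\in V'}A_i^+$. The members of $V'$ may be removed in different iterations, each time via a candidate outside $\bigcap_{i\in V'}A_i^+$, and the group itself is never examined. Concretely, take $n=4$, $k=2$, candidates $\{c_1,c_2,c_3\}$, and
\[
A_1^+=\{c_1\},\quad A_2^+=\{c_1,c_2\},\quad A_3^+=\{c_2,c_3\},\quad A_4^+=\{c_3\}.
\]
With $l=1$ and threshold $\tfrac{n}{k}=2$, the pairs $\{1,2\}$, $\{2,3\}$, $\{3,4\}$ are all $1$-cohesive with singleton intersections $\{c_1\}$, $\{c_2\}$, $\{c_3\}$ respectively, so NCR forces $c_1,c_2,c_3\in W$, impossible for $|W|=2$. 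Hence the proposition as stated is false, the paper's proof is in error (indeed the paper's own discussion section later remarks that NCR ``does not always exist''), and no amount of tightening your greedy charging argument will rescue it. The configuration you feared---overlapping level-$1$ groups with pairwise disjoint singleton intersections---is precisely what occurs here; it cannot be ruled out under the non-strict threshold $|V'|\ge l\tfrac{n}{k}$ used in the axiom.
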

\begin{proof}
 We present a constructive proof, inspired by the proof of proposition 3.7 presented in \cite{propPB}, which we specialize to the settings of multi-winner voting. Let's say, the algorithm iterates over $l'$ with its initial value $k$ and it terminates as soon as the value of $l'$ reaches $0$. Initially $A' = V$, where $A'$ acts as a container set for the voters who still remain to be served representation in the committee and let $W = \emptyset$. The algorithm greedily satisfies voter groups by providing them with representation in the committee and once they have been represented, they are removed from consideration to provide representation. At the beginning of every iteration, the condition $|W| + l' \leq k$ in order to ensure the committee size remains bounded by $k$; if this condition fails in any iteration, reduce $l'$ by $1$ and continue to the next iteration. Then, let $C^*$ denote the set of subsets of candidates of size $l'$. 
 $$C^* = \{C' \subseteq C: |C'| = l'\}$$
 If $C^*$ is empty, this means that there are no subsets of candidates of size exactly $l'$ and we cannot satisfy any subset of voters of size at least $l'\frac{n}{k}$ at this juncture of the algorithm, so we reduce $l'$ by $1$ and continue to the next iteration. If not so, for each subset of candidates $C' \in C^*$, define $A^+(C')$ as follows: 
 $$A^+(C') = \{i \in A': C' \subseteq A_i^+\}$$
 select the voter set $A^+(C')$ of maximal size, and check if $A^+(C') \geq l\frac{n}{k}$; if not so, decrease $l'$ by $1$ and continue to the next iteration, since the consequence of this step is that there is no subset of unrepresented voters which approves $l'$ candidates and has a size of at least $l'\frac{n}{k}$. But, if the condition is indeed true, set $W \rightarrow W  \cup \text{ }C'$ and $A' \rightarrow A' \setminus A^+(C')$. Preserving the value of $l'$ as is in pursuit of finding another subset of candidates of size $l'$ which is supported by correspondingly large group of unsatisfied voters and continue to the next iteration. If the algorithm reaches the point where $l' = 0$ but $|W| < k$, it means that there are no subsets of voters that are large and cohesive enough to have a candidate in the committee. In that case, we arbitrarily add candidates in the committee in order to fill it to its size, $k$. \\
 Now we prove the correctness of the algorithm. Suppose that a committee that is computed by the above algorithm $W$ does not satisfy \emph{New Cohesiveness Representation}, which means that there is at least one set of voters $V'$such that its size is $V' \geq l'\frac{n}{k}$ and $|\cap_{i \in V'}A_i^+| \geq l'$ but $|\cap_{i \in V'}A_i^+ \cap W| < l'$. Every candidate in the committee $W$ represents a group of at least $\frac{n}{k}$ voters, each of which once granted representation is not entertained further. This means that the number of voters getting representation in this committee is 
 $$|W|\cdot(\frac{n}{k}) = n$$
 thus, there does not exist any $l'$ such that a set of voters $V'$ not have $l' \leq |V'|k/n$ candidates to represent it in the committee. Thus, the contradiction.
\end{proof}
The above axiom promises representation to cohesive groups of voters from exactly that set of voters about which they mutually agree. 
We further present a weaker version of the \emph{Weaker New Cohesiveness Representation} whereby we weaken the amount of representation provided to the voters in a cohesive and deserving group. In some ways, this axiom mimics Proportional Justified Representation in dichotomous preferences. 

\begin{axiom}[Weaker New Cohesiveness Representation (WNCR)]
A committee satisfies \emph{New Cohesiveness Representation} if for no set of voters of size $|V'| \geq l\frac{n}{k}$ for all $l \in [1, k]$, if $|\cap_{i \in V'}A_i^+| \geq l$ then $|(\cup_{i \in V'}A_i^+) \cap W| < l$.
\end{axiom}

\begin{prop}
There exists a polynomial time algorithm that determines a committee that satisfies \emph{Weaker New Cohesiveness Representation}.
\end{prop}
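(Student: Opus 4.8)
The plan is to recognize that \emph{Weaker New Cohesiveness Representation} is nothing more than Proportional Justified Representation imposed on the positive ballots of the trichotomous profile. Indeed, the defining condition of \emph{WNCR}---that no voter group $V'$ with $|V'|\ge l\frac{n}{k}$ and $|\cap_{i\in V'}A_i^+|\ge l$ has $|(\cup_{i\in V'}A_i^+)\cap W| < l$---is obtained from the \emph{PJR} condition verbatim by substituting each approval set $A_i$ with the positive set $A_i^+$; the disapproval sets $A_i^-$ and the indifference sets $A_i^0$ never appear. So the first step is to map the given trichotomous profile $A_{tri}=((A_1^+,A_1^-),\dots,(A_n^+,A_n^-))$ to the dichotomous profile $A_{dic}=(A_1^+,\dots,A_n^+)$, keeping the same voter set and the same target size $k$; this transformation is clearly computable in linear time and leaves the quota $\frac{n}{k}$ and the range $l\in[k]$ unchanged.

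Next I would invoke any known polynomial-time approval-based rule that is guaranteed to output a \emph{PJR} committee, for instance sequential Phragm\'en or the Expanding Approvals Rule \cite{aziz2020expanding}, and run it on $A_{dic}$ to obtain a committee $W$ of size $k$. By the correspondence just noted, a committee satisfies \emph{PJR} on $A_{dic}$ if and only if it satisfies \emph{WNCR} on $A_{tri}$: a violating group for one is, by identical inequalities, a violating group for the other. Hence $W$ satisfies \emph{WNCR}, and since the chosen rule runs in time polynomial in $n,m,k$ and the profile reduction is negligible, the overall procedure is polynomial time, which is exactly what the statement asks for.

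The step that needs the most care---really the only load-bearing fact---is the claim that a \emph{PJR} committee can be found in polynomial time for an \emph{arbitrary} approval profile; this is not obvious because \emph{verifying} \emph{PJR} is coNP-hard, so I would cite the constructive guarantee rather than attempt to re-derive it, confirming only that the cited rule's proportionality guarantee applies to every approval profile (and in particular to $A_{dic}$, whose sets $A_i^+$ may be arbitrary subsets of $C$). It is worth contrasting this with the greedy routine used earlier to establish the existence of \emph{New Cohesiveness Representation}: because \emph{NCR} refers to $\cap_{i\in V'}A_i^+$ in its representation clause, satisfying a group there forces enumeration of size-$l'$ candidate subsets and is therefore exponential, whereas the weakening to $\cup_{i\in V'}A_i^+$ is precisely what lets us piggyback on the efficient \emph{PJR} machinery. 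As a consistency check one may also observe that \emph{NCR} implies \emph{WNCR} (since $(\cap_{i\in V'}A_i^+)\cap W\subseteq(\cup_{i\in V'}A_i^+)\cap W$), so existence itself is never in doubt; the content of the proposition is purely the polynomial running time.
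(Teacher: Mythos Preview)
Your proposal is correct and follows essentially the same approach as the paper: both reduce \emph{WNCR} to \emph{PJR} on the dichotomous profile obtained by keeping only the positive sets $A_i^+$ (equivalently, merging $A_i^0$ into $A_i^-$), and then invoke a known polynomial-time \emph{PJR} rule such as sequential Phragm\'en. Your write-up is in fact more detailed and careful than the paper's own argument, which sketches the same reduction in a single paragraph.
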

The proof of this proposition can be argued to by asserting the equivalence between Weaker New Cohesiveness Representation in trichotomous preferences and Proportional Justified Representation in dichotomous preferences. This axiom mimics Proportional Justified Representation since the merging of disapproval and indifference sets of voters in the electorate would neither change the nature of the large enough voting groups who deserve representation nor would it change the nature of representation that they receive in the committee otherwise. Since we know that well known polynomial time rules like Sequential Phragmen's Rule satisfy Proportional Justified Representation \cite{phragmen}, we conclude that committees satisfying Weaker New Cohesiveness Representation always exist.

\begin{figure}
    \centering
    \includegraphics[scale = 0.9]{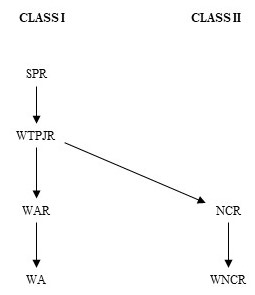}
    \caption{Relationship between the proposed Class I and Class II axioms}
    \label{fig:my_label}
\end{figure}
 
\section{Trichotomous Voting Rules}
In this section we present trichotomous versions of popular scoring rules especially with the objective of finding the most suitable rule providing proportional representation. While some voting rules have been taken from results cited correspondingly, we propose the approximate variants of trichotomous versions of Chamberlin Courant and PAV respectively and to the best of our knowledge, use Droop-STV in trichotomous settings. 
\subsection{Trichotomous Chamberlin Courant Rule} 
The $\alpha$-CC (Chamberlin Courant) rule is computed as follows. A satisfaction function for a committee $W \subseteq C$ such that $|W| \geq k$ is defined as follows: 

\[ sat_{\alpha-CC}(v, W) = 
    \begin{cases} 
      0 & |A_v^+ \cap W| - |A_v^- \cap W| < \alpha \\
      1 & otherwise
   \end{cases}
\]
The rule $\alpha$-CC finds a committee that maximizes $\Sigma_vsat_{\alpha-CC}(v, W)$, that is, the committee in which the highest number of voters have difference between the number of approved and disapproved candidates at least $\alpha$ in it. Since this rule is nothing but a slight variant of dichotomous $\alpha$-CC, computing a committee using this rule is NP-Hard \cite{CCAV} \cite{Zhou}.

\subsection{Trichotomous Proportional Approval Voting}
The TPAV score of a voter for a committee $W$ is calculated as follows; declare satisfaction and dis-satisfaction functions as follows:

\[ sat_{TPAV}(v, W) = 
    \begin{cases} 
      0 & \text{if } |A_v^+ \cap W| = 0 \\
      \Sigma_{p = 1}^{|A_v^+ \cap W|} \frac{1}{p} & otherwise
   \end{cases}
\]
\[ dissat_{TPAV}(v, W) = 
    \begin{cases} 
      0 & \text{if } |A_v^- \cap W| = 0 \\
      \Sigma_{p = 1}^{|A_v^- \cap W|} \frac{1}{p} & otherwise
   \end{cases}
\]
The committee that maximizes $\Sigma_{v}(sat_{TPAV}(v, W) - dissat_{TPAV}(v, W))$ is the one that is selected by the rule as the winner. From remark \ref{remark:remark1}, since TPAV is at least as hard as dichotomous PAV, winner determination in TPAV is NP-Hard. \cite{Zhou}

\subsection{Sequential Trichotomous CC and Trichotomous PAV}
The sequential variants of these rules can be generalized as follows; we start with an empty committee $W = \emptyset$ and iteratively add a candidate  $c$ in every iteration till $|W| < k$ such that $\Sigma_vsat_{\alpha-CC}(v, W \cup c)$ and $\Sigma_{v}(sat_{TPAV}(v, W \cup c) - dissat_{TPAV}(v, W \cup c))$ are maximized for Sequential Trichotomous $\alpha$-CC and Sequential Trichotomous PAV respectively. Note that at every iteration, the candidate added in the committee is deleted from the set of available candidates to be added in the committee. Generally, a candidate $c$ is added to the committee $W$ for a rule $\mathcal{R}$, where $\mathcal{R} \in \{\text{Sequential TCC, Sequential TPAV}\}$ if $\Sigma_v(sat_\mathcal{R}(v, W \cup c) - dissat_\mathcal{R}(v, W \cup c))$ is maximized, where $dissat_{TCC}(v, W) = 0$ $\forall W \in V^{k}$.

\subsection{Trichotomous Sequential Monroe}
There are several relevant approximation algorithms for the Monroe rule \cite{Lu10budgetedsocial} \cite{skowronMonroe}.
Here, we adapt Algorithm A proposed in \cite{skowronMonroe} to trichotomous settings. We proceed in $k$ steps, greedily building the committee $W$ by adding a candidate $c$ at every iteration. Define a satisfaction function $\Gamma(c): C \rightarrow V^{\Ceil{\frac{n}{k}}}$ such that for a candidate $c$ not yet added in the committee and $|\Gamma(c)| = \Ceil{\frac{n}{k}}$. Explicitly, $\Gamma(c)$ returns the set of voters such that $\forall i \in \Gamma(c) \text{ and } j \in V \setminus \Gamma(c), \text{ } pos_c(i) > pos_c(j)$ i.e. we choose the top $\Ceil{\frac{n}{k}}$ voters who have the highest positional score for the candidate $c$. Induct the candidate $c$ in the committee if the sum of positional scores of voters in $\Gamma(c)$ is the highest and remove the voters in $\Gamma(c)$ from the set of unsatisfied voters and $c$ from the set of available candidates. In cases where the number of unrepresented voters becomes $\emptyset$, we randomly add candidates into the committee $W$ until $|W| < k$. The approximation ratio is guaranteed to be $ 1 - \frac{k - 1}{2(m - 1)} - \frac{H_k}{k}$ where $m$ is the number of candidates and $H_k$ is the $k^{th}$ harmonic number \cite{skowronMonroe}. 

\subsection{Droop-Standard Transferable Vote}
We adapt this rule from the family of STV rules mentioned in \cite{aziz2020expanding}. We greedily construct a committee $W$ in the following manner; find the candidate $c$ with maximum plurality score (the important step here is to break ties randomly, not lexicographically) and compare it with the Droop quota i.e. $\Floor{\frac{n}{k + 1}} + 1$ and add the candidate to the committee if its score is greater than the quota, removing it from the set of candidates available for induction to the committee. Otherwise if the maximum score is less than the quota, we again break ties randomly, which is an important step for winner determination and remove that candidate from the list of available candidates. At the end of each iteration, we delete the candidate in focus, $c$ from the preference order of every voter and move on to the next iteration if $|W| < k$.

\section{Experimental Setup}
We conduct a series of experiments, each of which in itself has been conducted a number of times and hence we present the average results scaled so that optimum rule gives out a probability at most $1$. We conduct these experiments in order to quantitatively show which voting rule is the best suited for trichotomous settings. In order to do this, we consider $4$ of the well known polynomial time executable voting rules mentioned above and find the probability that the output of each of these voting rule satisfies our proposed axioms by generating  $10,000$ trichotomous voter preference profiles randomly.

We generate the voter preference profiles using the Impartial Culture model of preferences in which the weak preference order for a voter is drawn uniformly at random from the set of weak preference orders defined over the set of candidates $C$. Additionally, in each of the voter profiles, the number of voters and the number of candidates are picked randomly ranging from $4 \text{ to } 20$ and $1 \text{ to } 15$ respectively. \\
We use the Impartial Culture of Voting since it is a standard method of randomized profile generation to study voting mechanisms. As mentioned earlier, we divide the axioms in two different classes i.e. Class I and Class II. For every axiom presented, we find the probability that our chosen voting rules select a committee that satisfies it.   

\begin{table}

\begin{tabular}{|c|c|c|c|c|c|}
\hline
 Voting Rules & \multicolumn{3}{|c|}{Class I} & \multicolumn{2}{|c|}{Class II} \\
 \hline
 & WA & WAR & WTPJR & WNCR & NCR\\
 \hline
Sequential Monroe  & 0.9996  & 0.992 & 0.9878 & 0.996 & 0.769 \\
\hline
Sequential $\alpha$-Chamberlin Courant  & 0.9995 & 0.984 & 0.9696 & 0.9874 & 0.788 \\
\hline
Multi-winner STV  & 0.9998 & 0.993 & 0.99 & 0.9974 & 0.789\\
\hline
Sequential PAV & 0.9997 & 0.9934 & 0.9854 & 0.9928 & 0.8058 \\
\hline
\end{tabular}

\caption{Probabilities of voting rules satisfying Class I and Class II axioms over 10,000 randomly generated profiles with $|V| \in [4, 20]$ and $|C| \in [1, 15]$ between 1 and 15 and $k \in [1, |C| - 1]$ }
\label{Table 1}

\end{table}

\section{Discussion of Experimental Results}

We present our results in Table 1. We present a characterization of both Class I and Class II rules and further draw a contrast between the two and reason about the kind of results mentioned for both the classes correspondingly.

\subsection{Class I}
Our class $1$ axioms are based on the antecedent of getting representation being very strong i.e. there are larger number of large enough groups qualifying for representation as compared to those in the class $2$ axioms. In effect, all our sequential voting rules produce every good results with almost all entries in the table being close to 98\%. 

Amongst the Class I axioms, the strength of axioms increases as follows; WA, WAR and WTPJR. Due to a very weak nature of WA which allows voter groups getting representation from the union of approval and indifference classes, all voting rules almost always produce a committee that satisfies WA, though there are some exceptions in cases where the ratio $k/m$ is high. The data reflects a careful weakening of the stronger axioms WAR and WTPJR where the probability of finding a committee that satisfies these axioms is marginally less than the weakest version and hence these also emerge to be quite suitable axioms for evaluation of the quality of a committee determined by these polynomial time computable voting algorithms.

\subsection{Class II}

Our Class II axioms reveal an approach towards proportional representation that is commonly taken by Proportional Justified Representation \cite{PJR}. Hence, our voting rules tweaked for trichotomous ballots determine committees that satisfy axioms of this class with lesser probability as compared to the Class I axioms. NCR bears similarity to trichotomous version of Strong-BPJR \cite{propPB} presents a very strong notion of representation in the committee and hence does not always exist \cite{JR}. This is instantiated by the figures in table 1 where the best rule for this axiom is Sequential PAV while the worst being Sequential Monroe. Interestingly, the exact order is not repeated in the case when WNCR is taken as the axiom to be tested though what remains common is that Sequential $\alpha$-CC remains the voting rule which fares the worst. 

\subsection{Analysis}

The general trend in the fitness of voting rules for trichotomous settings turns out to be that Droop-STV produces the best results while Sequential $\alpha$-Chamberlin Courant rule produces the worst results. This re-asserts the usefulness of STV in electing a proportionally representative committee in dichotomous preferences as well \cite{aziz2020expanding} \cite{propMultiwinner} and also asserts the fact that Chamberlin Courant is not the best suited voting rule for proportional representation however impeccable it is for the selection of diverse committees \cite{Nimrod1}. Additionally, sequential Monroe and Sequential PAV also provide good results, although a little less than Droop-STV. In general, for the proposed Class I the following order of fittest voting rules for randomized preference ballot generation is found; Droop-STV, Sequential PAV, Sequential Monroe, Sequential $\alpha$-Chamberlin Courant. Alternatively, for Class II axioms, which take a little more cognizance of proportional representation axioms for dichotomous preferences \cite{PJR} \cite{JR}, it is correct to say that Sequential $\alpha$-CC is not the most suitable rule for randomly generated trichotomous voter profiles and while Droop-STV fares the best in weaker axioms of the class, it does not do so for the stronger axioms. Sequential-PAV is a good rule for those cases when a strong variant of axioms of this is preferred while other presented rules might not be the as effective as they are in with the weaker axioms.

\section{Outlook}

To be able to accommodate negative feelings, we have proposed two broad classes of axioms for proportional representation in trichotomous preference domains, which provide greater flexibility to the voters than approval ballots. We study and propose some voting rules for such input formats and show by simulations that these rules adhere to our axioms to a large extent. Therefore, we provide strong basis for the design of polynomial time computable voting rules taking input as trichotomous preferences, which select committees that satisfy certain proportionality axioms to this setting.
We mention some future research directions below.

\paragraph{Different Axioms and Rules}
Here we concentrated on adaptations of JR-style axioms to our setting. Naturally, there are other ways to approach proportionality, hopefully also giving rise to a richer landscape of voting rules for this setting.

\paragraph{Participatory Budgeting}
One application in which negative feelings are quite prominent is participatory budgeting, in which some projects might be perceived as hurting certain voter groups (e.g., building a sports stadium causes traffic jams, building a bus station causes pollution, etc.).
Currently such negative feelings are not taken into account while selecting project bundles for participatory budgeting; lifting the results of our paper to this important usecase is of theoretical as well as practical importance.

\bibliography{bibliography.bib}
\bibliographystyle{unsrt}
\end{document}